\documentclass[lettersize,journal]{IEEEtran}
\IEEEoverridecommandlockouts

\usepackage{amssymb}
\usepackage[cmex10]{amsmath}
\usepackage{stfloats}
\usepackage{graphicx}
\usepackage{subfigure}
\usepackage{tabularx}
\usepackage{epsfig,epsf,color,balance,cite}
\usepackage{verbatim}
\usepackage{url}
\usepackage{bm}
\usepackage{booktabs}
\usepackage{siunitx}

\usepackage{amsmath}
\allowdisplaybreaks[2]
\usepackage{ntheorem}
\newtheorem{theorem}{Theorem}

\newtheorem*{proof}{Proof}

\usepackage{algorithm}
\usepackage{algorithmic}

\hyphenation{net-works}

\usepackage{color}
\definecolor{myc1}{rgb}{0,0,0}
\linespread{0.972}

\begin{document}

\title{Agentic AI for Low-Altitude Semantic Wireless Networks: An Energy Efficient Design}

\author{Zhouxiang Zhao, 
        Ran Yi,
        Yihan Cang, 
        Boyang Jin,
        Zhaohui Yang, 
        Mingzhe Chen,~\IEEEmembership{Senior Member,~IEEE,}
        Chongwen Huang, 
        and Zhaoyang Zhang,~\IEEEmembership{Senior Member,~IEEE}

\thanks{Zhouxiang Zhao, Ran Yi, Boyang Jin, Zhaohui Yang, Chongwen Huang, and Zhaoyang Zhang are with the College of Information Science and Electronic Engineering, Zhejiang University, and also with Zhejiang Provincial Key Laboratory of Info. Proc., Commun. \& Netw. (IPCAN), Hangzhou 310027, China (e-mails: \{zhouxiangzhao, ranyi, byjin10225, yang\_zhaohui, chongwenhuang, ning\_ming\}@zju.edu.cn).}
\thanks{Yihan Cang is with National Mobile Communications Research Laboratory, Southeast University, Nanjing 211189, China (e-mail: yhcang@seu.edu.cn).}
\thanks{Mingzhe Chen is with Department of Electrical and Computer Engineering and Institute for Data Science and Computing, University of Miami, Coral Gables, FL 33146, USA (e-mail: mingzhe.chen@miami.edu).}
\vspace{-2em}
}

\maketitle

\begin{abstract}
This letter addresses the energy efficiency issue in unmanned aerial vehicle (UAV)-assisted autonomous systems. We propose a framework for an agentic artificial intelligence (AI)-powered low-altitude semantic wireless network, that intelligently orchestrates a sense-communicate-decide-control workflow. A system-wide energy consumption minimization problem is formulated to enhance mission endurance. This problem holistically optimizes key operational variables, including UAV's location, semantic compression ratio, transmit power of the UAV and a mobile base station, and binary decision for AI inference task offloading, under stringent latency and quality-of-service constraints. To tackle the formulated mixed-integer non-convex problem, we develop a low-complexity algorithm which can obtain the globally optimal solution with two-dimensional search. Simulation results validate the effectiveness of our proposed design, demonstrating significant reductions in total energy consumption compared to conventional baseline approaches.
\end{abstract}

\begin{IEEEkeywords}
Agentic AI, semantic communications, low-altitude intelligence, energy efficiency.
\end{IEEEkeywords}

\IEEEpeerreviewmaketitle

\section{Introduction}
\IEEEPARstart{D}{riven} by the advancements in artificial intelligence (AI), wireless communications, and low-altitude intelligence, autonomous systems featuring unmanned aerial vehicles (UAVs) are becoming integral to a myriad of applications, including intelligent surveillance, disaster response, and logistics \cite{8660516,11006980}. These systems typically operate in a closed loop of sensing, communication, computation, and action. A critical bottleneck in such UAV-assisted networks is the transmission of high-dimensional sensor data, such as real-time video streams, from the UAV to a ground decision-making entity \cite{10275111}. This process is exceedingly demanding on both bandwidth and UAV's limited energy resources, posing a significant challenge to mission endurance and scalability.

To overcome this limitation, semantic communication has emerged as a transformative paradigm \cite{9955312,9921202,10915662}. Unlike traditional communication systems that focus on bit-level fidelity, semantic communication aims to extract and transmit only the essential, task-relevant meaning embedded within the source data. This approach can drastically reduce the volume of transmit data, thereby enhancing both spectral and energy efficiency. Concurrently, the rise of agentic AI, mostly powered by large AI models, provides the ``brain" for these autonomous systems, enabling sophisticated reasoning and decision-making based on the received semantic information \cite{zhang2025toward}. The integration of semantic communication with agentic AI thus paves the way for highly intelligent and efficient low-altitude wireless networks \cite{10972017}.

While the foundational concepts of semantic communication and AI-driven control have been explored, a holistic understanding of the system-level design and resource allocation remains a significant challenge. Prior works have focused on either the semantic codec design \cite{10768970} or AI task offloading \cite{8533343}, without jointly considering the intricate interplay between the level of semantic compression, physical deployment of the UAV, and strategic offloading of AI inference tasks. For instance, deeper semantic compression reduces transmission energy but increases on-board computation energy. Similarly, offloading the AI inference task from the edge to a powerful cloud server saves local resources but incurs additional communication latency and energy costs. A comprehensive framework that jointly optimizes these highly-coupled variables to enhance the overall system efficiency is therefore urgently needed.

In this letter, we address this gap by designing an energy-efficient agentic AI-assisted semantic wireless network. We consider a collaborative system where a UAV performs semantic sensing, a mobile base station (BS) acts as an edge intelligence hub, and a robot executes control commands. The main contributions of this work are summarized as follows:
\begin{itemize}
    \item We propose a framework for an agentic AI-assisted low-altitude semantic wireless network that intelligently orchestrates a sense-communicate-decide-act workflow.
    \item We formulate a system-wide energy consumption minimization problem that jointly optimizes the UAV's three-dimensional (3D) location, the semantic compression ratio, the transmit powers of the UAV and BS, and the binary decision of AI inference task offloading (edge vs. cloud), under latency and quality-of-service (QoS) constraints. To solve the formulated non-convex problem, we develop an efficient algorithm which can find the globally optimal solution.
    \item Simulation results validate the superiority of the proposed framework in significantly reducing the system's energy consumption compared to baseline approaches.
\end{itemize}

\section{System Model and Problem Formulation}\label{Sec:smpf}
\subsection{Network Model}
\begin{figure}[t]
    \centering
    \includegraphics[width=\linewidth]{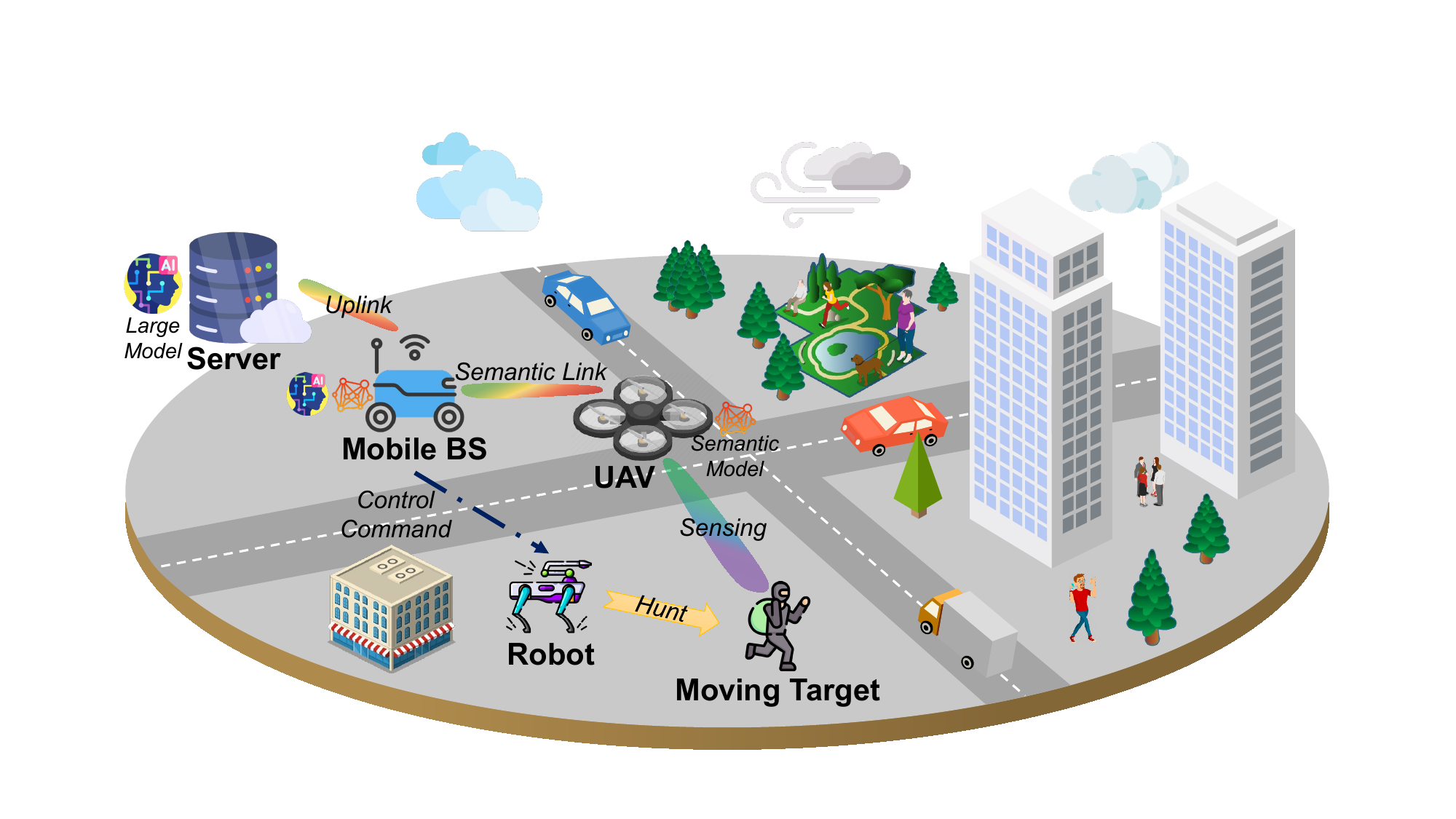}
    \caption{An illustration of the agentic AI-assisted semantic wireless network.}
    \label{fig.sm}
    \vspace{-1em}
\end{figure}

Consider an agentic AI-assisted semantic wireless network comprising a single UAV, a mobile BS, a cloud server, and a robot, as depicted in Fig.~\ref{fig.sm}. The system operates in a low-altitude environment where the UAV is tasked with tracking a moving target. The UAV senses information about the target, which is then semantically compressed and transmitted to the mobile BS. We consider a mobile BS to leverage its mobility, thereby maintaining proximity to the UAV and ensuring a robust communication link.
Upon receiving the semantic information, the mobile BS first reconstructs the original data using a paired semantic model. Subsequently, it can employ an embodied large AI model to make intelligent decisions and generate control commands based on the reconstructed information. Alternatively, the mobile BS can offload this task by transmitting the reconstructed data to a cloud server, which possesses superior computational resources, for command generation. The final control command is then transmitted to the robot to execute a specific task.

\begin{figure}[t]
    \centering
    \includegraphics[width=\linewidth]{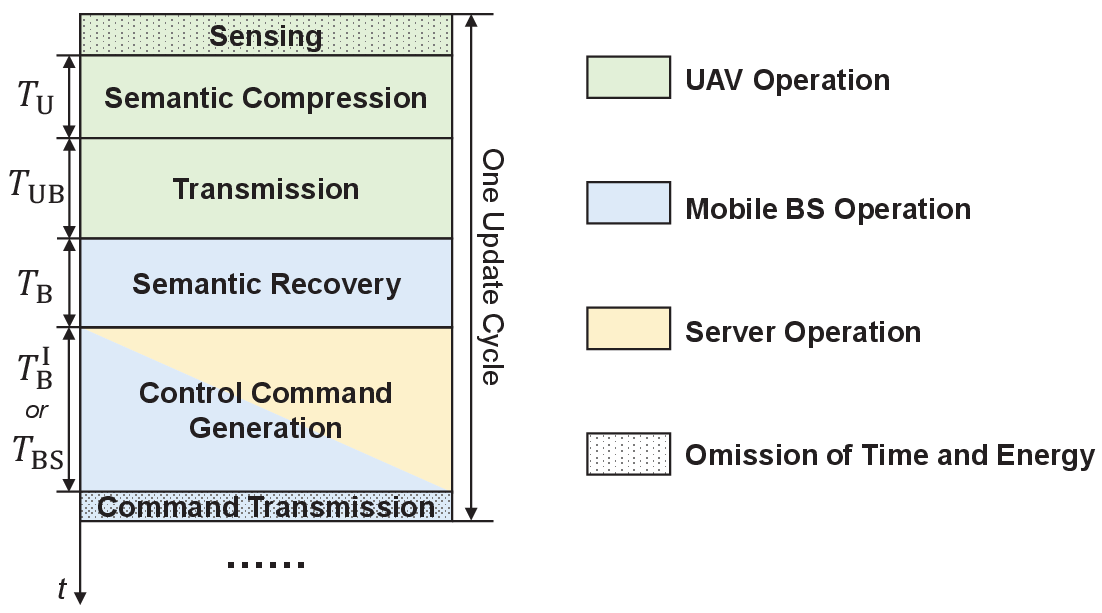}
    \caption{The operational workflow of the system within a single update cycle.}
    \label{fig.wp}
    \vspace{-1em}
\end{figure}

The operational workflow of the system is illustrated in Fig.~\ref{fig.wp}. We define an update cycle as the whole process of handling the data collected in one sensing interval. Each cycle involves several distinct stages: UAV sensing, on-board semantic compression, data transmission to the mobile BS, semantic recovery at the BS, control command generation, and command transmission to the robot. The control command generation can be performed either locally at the mobile BS or remotely at the cloud server.
Note that the time and energy for UAV sensing and the final command transmission to the robot are neglected, as they are assumed to be constant or negligible compared to the other components.

For analytical tractability, we establish a 3D Cartesian coordinate system where the mobile BS is located at the origin, i.e., $(0,0,0)$. The moving target's location in current time slot is denoted by $(x_\mathrm{T},y_\mathrm{T},0)$. The altitudes of both the mobile BS and the target are considered negligible relative to the UAV's operational altitude. The UAV's location is given by
\begin{equation}
    \mathbf{L}_\mathrm{U}=\left(x_\mathrm{U},y_\mathrm{U},H_\mathrm{U}\right),
\end{equation}
where $H_\mathrm{U}$ is the UAV's altitude.\footnote{The locations of all network components are assumed to be static within a single update cycle due to its short duration.} Consequently, the distance between the UAV and the mobile BS is $d_\mathrm{UB}=\sqrt{x_\mathrm{U}^2+y_\mathrm{U}^2+H_\mathrm{U}^2}$, and the distance between the UAV and the moving target is
\begin{equation}
    d_\mathrm{UT}=\sqrt{\left(x_\mathrm{U}-x_\mathrm{T}\right)^2+\left(y_\mathrm{U}-y_\mathrm{T}\right)^2+H_\mathrm{U}^2}.
\end{equation}

\subsection{UAV Sensing, Computation, and Communication Models}
Each update cycle commences with the UAV sensing the target, typically through video capture. The QoS for this sensing task is modeled as an exponentially decaying function of the sensing distance \cite{9130055}:
\begin{equation}
    q=e^{-\xi d_\mathrm{UT}},
\end{equation}
where $\xi$ is a parameter characterizing the sensing performance.

Let the raw data captured by the UAV be $\mathcal{D}$, with a total size of $D$ bits. This data undergoes semantic compression. We define the semantic compression ratio as $\rho=C/D$, where $C$ is the size of the compressed data. The computational overhead for compression with respect to $\rho$ is modeled as \cite{11096081}
\begin{equation}
    O(\rho)=-\kappa_1 D \ln(\rho),
\end{equation}
where $\kappa_1$ is the coefficient representing the computation efficiency of the semantic compression algorithm. The latency required for semantic compression is
\begin{equation}
    T_\mathrm{U}=\frac{-\kappa_1 D \ln(\rho)}{f_\mathrm{U}},
\end{equation}
where $f_\mathrm{U}$ is the computational capacity (in cycles/s) of the UAV's on-board processor. The corresponding energy consumption for this computation is
\begin{equation}
    E_\mathrm{U}=\tau_\mathrm{U} f_\mathrm{U}^3 T_\mathrm{U}, 
\end{equation}
where $\tau_\mathrm{U}$ is the effective switched capacitance coefficient of the UAV's processor.

After compression, the data is transmitted to the mobile BS. We assume the communication channel is dominated by the line-of-sight (LoS) path, which is a reasonable assumption for outdoor, low-altitude scenarios.\footnote{Specifically, we model a rural environment with a path-loss exponent of $\alpha=2$ and assume that signal power from antenna sidelobes is negligible.} The channel gain between the UAV and the BS is therefore
\begin{equation}
    G(\mathbf{L}_\mathrm{U})=\frac{G_0}{d_\mathrm{UB}^\alpha},
\end{equation}
where $G_0$ is the reference channel gain at a distance of $1$ m, and $\alpha$ is the path-loss exponent. Using the Shannon capacity formula, the transmission latency is
\begin{equation}
    T_\mathrm{UB}=\frac{D \rho}{B_\mathrm{U} \log_2\left(1+\frac{p_\mathrm{U} G(\mathbf{L}_\mathrm{U})}{B_\mathrm{U} N_0}\right)},
\end{equation}
where $B_\mathrm{U}$ is the channel bandwidth, $p_\mathrm{U}$ is the UAV's transmit power, and $N_0$ is the noise power spectral density. The energy consumed during this transmission is
\begin{equation}
    E_\mathrm{UB}=p_\mathrm{U} T_\mathrm{UB}.
\end{equation}

Upon reception, the mobile BS performs semantic recovery to reconstruct the original data. The time and energy for this process are modeled symmetrically to the compression phase:
\begin{equation}
    T_\mathrm{B}=\frac{-\kappa_2 D \ln(\rho)}{f_\mathrm{B}},
\end{equation}
where $\kappa_2$ and $f_\mathrm{B}$ are the semantic recovery efficiency parameter and the computational capacity of the mobile BS, respectively. The energy consumption for semantic recovery is
\begin{equation}
    E_\mathrm{B}=\tau_\mathrm{B} f_\mathrm{B}^3 T_\mathrm{B}. 
\end{equation}
where $\tau_\mathrm{B}$ is the effective switched capacitance coefficient of the mobile BS's processor.

\subsection{Control Command Generation Model}
Based on the recovered information, a control command for the robot is generated. This can occur either at the mobile BS or be offloaded to the cloud server.

\subsubsection{Case 1 (Generate at Mobile BS)}
The mobile BS uses its local large AI model for inference. We model the inference latency as being proportional to the original data size:
\begin{equation}
    T_\mathrm{B}^\mathrm{I}=\frac{\kappa_3 D}{f_\mathrm{B}},
\end{equation}
where $\kappa_3$ is the inference computation efficiency parameter. The energy consumed for task inference by the mobile BS is

\begin{equation}
    E_\mathrm{B}^\mathrm{I}=\tau_\mathrm{B} f_\mathrm{B}^3 T_\mathrm{B}^\mathrm{I}.
\end{equation}

\subsubsection{Case 2 (Generate at Cloud Server)}
The mobile BS transmits the recovered data to the cloud server. The latency required for this uplink transmission is
\begin{equation}
    T_\mathrm{BS}=\frac{D}{B_\mathrm{B} \log_2\left(1+\frac{p_\mathrm{B} G_\mathrm{BS}}{B_\mathrm{B} N_0}\right)},
\end{equation}
where $B_\mathrm{B}$ is the bandwidth for the BS-server link, $p_\mathrm{B}$ is the mobile BS's transmit power, and $G_\mathrm{BS}$ is the corresponding channel gain. The energy cost for this transmission is
\begin{equation}
    E_\mathrm{BS}=p_\mathrm{B} T_\mathrm{BS}.
\end{equation}
Given the substantial computational power of cloud servers, the inference latency at the server is considered negligible. Similarly, the server's energy consumption is not factored into our model as it is assumed to have a constant power supply.

To distinguish these two cases, we introduce a binary decision variable $a \in \{0, 1\}$. If generation is performed locally at the mobile BS, $a=1$; otherwise, $a=0$.

\subsection{Problem Formulation}
The total energy consumption of the system per update cycle is the sum of the energy consumed for UAV computation and transmission, BS computation, and the task-dependent energy for command generation, i.e.,
\begin{equation}
    E_\mathrm{tot}=E_\mathrm{U}+E_\mathrm{UB}+E_\mathrm{B}+a E_\mathrm{B}^\mathrm{I}+(1-a)E_\mathrm{BS}.
\end{equation}

Our objective is to minimize the total energy consumption by jointly optimizing the UAV's 3D location, the semantic compression ratio, the transmit powers, and the command generation offloading decision. Mathematically, this optimization problem is formulated as follows:
\begin{subequations}\label{eq.pf}
    \begin{align}
		\min_{a,\mathbf{L}_\mathrm{U},\rho,p_\mathrm{U},p_\mathrm{B}} \quad &  E_\mathrm{tot} \tag{\theequation}\\
		\mathrm{s.t.}\hspace{2.3em}
        & T_\mathrm{U}+T_\mathrm{UB}+T_\mathrm{B}+aT_\mathrm{B}^\mathrm{I}+(1-a)T_\mathrm{BS} \leq T_{\mathrm{th}},\label{cons.T}\\
        & q\geq q_{\mathrm{th}},\label{cons.q}\\
        & H_{\min}\leq H_\mathrm{U} \leq H_{\max},\label{cons.H}\\
        & \rho_{\mathrm{th}}\leq\rho\leq 1,\label{cons.rho}\\
        & 0<p_\mathrm{U}\leq p_\mathrm{U}^{\max},\label{cons.pu}\\
        & 0<p_\mathrm{B}\leq p_\mathrm{B}^{\max},\label{cons.pb}\\
        & a \in \{0,1\}.\label{cons.a}
    \end{align}
\end{subequations}
Constraint \eqref{cons.T} imposes a maximum allowable latency, $T_{\mathrm{th}}$, for the entire update cycle. Constraint \eqref{cons.q} ensures a minimum required sensing QoS, $q_{\mathrm{th}}$. The UAV's altitude is constrained by \eqref{cons.H} to lie within a regulated range $[H_{\min}, H_{\max}]$. The semantic compression ratio is bounded by \eqref{cons.rho}, where $\rho_{\mathrm{th}}$ is a lower limit to maintain semantic integrity. Constraints \eqref{cons.pu} and \eqref{cons.pb} define the maximum transmit powers for the UAV and mobile BS, respectively. Finally, \eqref{cons.a} specifies the binary nature of the task offloading variable.

The mixed-integer non-convex problem \eqref{eq.pf} is NP-hard. In the next section, we exploit the unique structure of problem \eqref{eq.pf} to find its globally optimal solution via an efficient two-dimensional search.

\section{Algorithm Design}\label{Sec:ad}

\subsection{Optimal UAV Location}
Intuitively, to minimize both energy consumption and latency, the UAV should be positioned as close as possible to the mobile BS, subject to the operational constraints. A shorter distance $d_\mathrm{UB}$ enhances the channel gain, which in turn reduces the required transmission power and time. This insight allows us to decouple the UAV placement problem. The optimal UAV location, $\mathbf{L}_\mathrm{U}^*$, can be found by solving the following convex optimization problem:
\begin{subequations}\label{eq.oul}
    \begin{align}
		\min_{\mathbf{L}_\mathrm{U}} \quad &  d_\mathrm{UB} \tag{\theequation}\\
		\mathrm{s.t.}\hspace{1em}
        & d_\mathrm{UT}\leq \frac{-\ln(q_{\mathrm{th}})}{\xi},\label{cons.q1}\\
        & H_{\min}\leq H_\mathrm{U} \leq H_{\max}.\label{cons.H1}
    \end{align}
\end{subequations}
Problem \eqref{eq.oul} seeks to minimize the Euclidean distance to the origin subject to a convex set defined by the intersection of a sphere (representing the QoS constraint) and a slab (representing the altitude constraint). Let $D_{\max} = \frac{-\ln(q_{\mathrm{th}})}{\xi}$ be the maximum permissible distance from the UAV to the moving target. Assuming feasibility, i.e., $H_{\min} \leq D_{\max}$, the solution can be derived geometrically.

\begin{theorem}\label{theo1}
The optimal UAV altitude is the minimum allowable altitude:
\begin{equation}
    H_\mathrm{U}^* = H_{\min}.
\end{equation}
The optimal horizontal coordinates $(x_\mathrm{U}^*, y_\mathrm{U}^*)$ are determined by one of two cases:

\subsubsection{Case 1 ($x_\mathrm{T}^2 + y_\mathrm{T}^2 \le D_{\max}^2 - H_{\min}^2$)}
In this case, the target is sufficiently close to the BS. The optimal UAV location is directly above the BS at the minimum altitude:
        \begin{equation}
            \mathbf{L}_\mathrm{U}^* = (0, 0, H_{\min}).
        \end{equation}

\subsubsection{Case 2 ($x_\mathrm{T}^2 + y_\mathrm{T}^2 > D_{\max}^2 - H_{\min}^2$)}
In this case, the UAV must move towards the target to satisfy the QoS constraint. The optimal location lies on the edge of the cylindrical service area in the direction of the target:
        \begin{align}
            x_\mathrm{U}^* &= x_\mathrm{T} \left(1 - \frac{\sqrt{D_{\max}^2 - H_{\min}^2}}{\sqrt{x_\mathrm{T}^2 + y_\mathrm{T}^2}}\right), \\
            y_\mathrm{U}^* &= y_\mathrm{T} \left(1 - \frac{\sqrt{D_{\max}^2 - H_{\min}^2}}{\sqrt{x_\mathrm{T}^2 + y_\mathrm{T}^2}}\right).
        \end{align}
\end{theorem}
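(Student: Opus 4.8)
The plan is to decouple problem \eqref{eq.oul} into two nested optimizations. First I would fix the horizontal coordinates and argue that the altitude should be driven to its lower bound; then, with $H_\mathrm{U}=H_{\min}$ locked in, the remaining problem collapses to finding the point in a planar disk that is closest to the origin, which admits a clean geometric solution. The convexity noted in the surrounding text guarantees that whatever stationary point I identify is in fact the global minimizer.

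For the altitude, the key observation is that $H_\mathrm{U}$ enters both the objective $d_\mathrm{UB}=\sqrt{x_\mathrm{U}^2+y_\mathrm{U}^2+H_\mathrm{U}^2}$ and the sensing distance $d_\mathrm{UT}=\sqrt{(x_\mathrm{U}-x_\mathrm{T})^2+(y_\mathrm{U}-y_\mathrm{T})^2+H_\mathrm{U}^2}$ only through $H_\mathrm{U}^2$, and both are strictly increasing in it. Hence raising the altitude simultaneously inflates the objective and tightens the feasible horizontal region, so there is no tradeoff. I would make this precise with a short exchange argument: given any optimal $(\hat{x}_\mathrm{U},\hat{y}_\mathrm{U},\hat{H})$ with $\hat{H}>H_{\min}$, the point $(\hat{x}_\mathrm{U},\hat{y}_\mathrm{U},H_{\min})$ stays feasible (constraint \eqref{cons.q1} can only relax and \eqref{cons.H1} still holds) while strictly decreasing $d_\mathrm{UB}$, contradicting optimality. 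This forces $H_\mathrm{U}^*=H_{\min}$.

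With $H_\mathrm{U}=H_{\min}$ fixed, minimizing $d_\mathrm{UB}$ is equivalent to minimizing $x_\mathrm{U}^2+y_\mathrm{U}^2$, the squared horizontal distance from the origin, subject to $(x_\mathrm{U}-x_\mathrm{T})^2+(y_\mathrm{U}-y_\mathrm{T})^2\leq D_{\max}^2-H_{\min}^2$. This is exactly the Euclidean projection of the origin onto a disk of radius $\sqrt{D_{\max}^2-H_{\min}^2}$ centered at the target's horizontal projection $(x_\mathrm{T},y_\mathrm{T})$. Since the disk is a closed convex set, this projection exists and is unique, establishing global optimality. The two cases in the statement correspond precisely to the position of the origin relative to the disk: when $x_\mathrm{T}^2+y_\mathrm{T}^2\leq D_{\max}^2-H_{\min}^2$ the origin lies inside the disk and projects to itself, yielding $(0,0,H_{\min})$; otherwise the origin lies outside, and its projection is the boundary point along the ray from the center toward the origin at distance $\sqrt{D_{\max}^2-H_{\min}^2}$ from the center, which produces the stated rescaling of $(x_\mathrm{T},y_\mathrm{T})$.

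I expect the only genuine computation to be in Case 2: extracting the closed-form boundary point. I would either invoke the standard disk-projection formula or write the Lagrangian stationarity conditions, where the active constraint multiplier aligns the gradient of the objective with that of the constraint and forces $(x_\mathrm{U}^*,y_\mathrm{U}^*)$ to be collinear with the origin and the center $(x_\mathrm{T},y_\mathrm{T})$; solving for the scalar factor $1-\sqrt{D_{\max}^2-H_{\min}^2}/\sqrt{x_\mathrm{T}^2+y_\mathrm{T}^2}$ is then routine. A minor but necessary check is feasibility throughout: the stated assumption $H_{\min}\leq D_{\max}$ ensures the radius $\sqrt{D_{\max}^2-H_{\min}^2}$ is real and the disk nonempty, so both cases are well-defined.
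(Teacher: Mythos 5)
Your proposal is correct and follows essentially the same route the paper takes: the paper simply asserts that the solution to problem \eqref{eq.oul} "can be derived geometrically" from the intersection of the ball (QoS constraint) and the slab (altitude constraint), and your argument—driving $H_\mathrm{U}$ to $H_{\min}$ by monotonicity of both $d_\mathrm{UB}$ and $d_\mathrm{UT}$ in $H_\mathrm{U}^2$, then projecting the origin onto the resulting disk of radius $\sqrt{D_{\max}^2-H_{\min}^2}$ centered at $(x_\mathrm{T},y_\mathrm{T})$—is precisely that geometric derivation, made explicit. The case split and the closed-form boundary point match the theorem exactly.
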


\subsection{Optimal Semantic Compression Ratio and Transmit Power}

\subsubsection{Optimal Semantic Compression Ratio with Given Transmit Power}
With the optimal UAV location $\mathbf{L}_\mathrm{U}^*$, a fixed task offloading indicator $a$ and fixed transmit powers $p_\mathrm{U}$ and $p_\mathrm{B}$, the subproblem for the semantic compression ratio $\rho$ is
\begin{subequations}\label{eq.rho}
    \begin{align}
		\min_{\rho} \quad &  A\rho - F \ln(\rho) \tag{\theequation}\\
		\mathrm{s.t.}\hspace{1em}
        & K_1 \rho + K_2 \ln(\rho) \le T,\label{cons.T2}\\
        & \rho_{\mathrm{th}}\leq\rho\leq 1,\label{cons.rho2}
    \end{align}
\end{subequations}
where
\begin{align*}
    A & \triangleq \frac{p_\mathrm{U} D}{B_\mathrm{U} \log_2\left(1+\frac{p_\mathrm{U} G(\mathbf{L}_\mathrm{U}^*)}{B_\mathrm{U} N_0}\right)}, \  F \triangleq D (\tau_\mathrm{U} \kappa_1 f_\mathrm{U}^2 + \tau_\mathrm{B} \kappa_2 f_\mathrm{B}^2), \\
    K_1 & \triangleq \frac{D}{B_\mathrm{U} \log_2\left(1+\frac{p_\mathrm{U} G(\mathbf{L}_\mathrm{U}^*)}{B_\mathrm{U} N_0}\right)}, \  K_2 \triangleq -D\left(\frac{\kappa_1}{f_\mathrm{U}} + \frac{\kappa_2}{f_\mathrm{B}}\right), \\
    T & \triangleq T_{\mathrm{th}} - \left( \frac{a \kappa_3 D}{f_\mathrm{B}} + \frac{(1-a)D}{B_\mathrm{B} \log_2\left(1+\frac{p_\mathrm{B} G_\mathrm{BS}}{B_\mathrm{B} N_0}\right)} \right).
\end{align*}

\begin{theorem}\label{theo2}
The optimal semantic compression ratio of problem \eqref{eq.rho} is  
\begin{equation}\label{rhostar}
     \rho^*\left(p_\mathrm{U},p_\mathrm{B}\right) = \min\left\{\max\left\{\rho_0, \rho_{\mathrm{th}}, \rho_a\right\}, 1, \rho_b \right\},
\end{equation}
where $\rho_0 = \frac{F}{A}$, $\rho_a < \rho_b$ are the roots of $K_1 \rho + K_2 \ln(\rho) = T$.
\end{theorem}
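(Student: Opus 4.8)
The plan is to treat problem \eqref{eq.rho} as the minimization of a one-dimensional strictly convex function over an interval-valued feasible set, for which the optimizer is simply the projection (clipping) of the unconstrained minimizer onto that interval. First I would analyze the objective $g(\rho) \triangleq A\rho - F\ln(\rho)$. Since $A>0$ and $F>0$, we have $g''(\rho) = F/\rho^2 > 0$ on $\rho>0$, so $g$ is strictly convex, and its stationary condition $g'(\rho) = A - F/\rho = 0$ yields the unique unconstrained minimizer $\rho_0 = F/A$. Consequently $g$ is strictly decreasing for $\rho<\rho_0$ and strictly increasing for $\rho>\rho_0$.

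Next I would characterize the feasible set. The box constraint \eqref{cons.rho2} contributes the interval $[\rho_{\mathrm{th}},1]$. For the latency constraint \eqref{cons.T2}, write $h(\rho) \triangleq K_1\rho + K_2\ln(\rho)$. The key observation is that $K_1>0$ while $K_2<0$, since $K_2$ collects the on-board compression and recovery times, which shrink as $\rho$ grows. Hence $h''(\rho) = -K_2/\rho^2 > 0$, so $h$ is strictly convex on $\rho>0$, with $h(\rho)\to+\infty$ both as $\rho\to 0^+$ (driven by $K_2\ln\rho$) and as $\rho\to+\infty$ (driven by $K_1\rho$). Under the feasibility assumption, the sublevel set $\{\rho>0 : h(\rho)\le T\}$ is therefore exactly the closed interval $[\rho_a,\rho_b]$ delimited by the two roots of $h(\rho)=T$.

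Intersecting the two constraints gives the feasible interval $[\ell,u]$ with $\ell = \max\{\rho_a,\rho_{\mathrm{th}}\}$ and $u = \min\{\rho_b,1\}$. Minimizing the strictly convex $g$ over $[\ell,u]$ then yields $\rho^* = \rho_0$ if $\rho_0\in[\ell,u]$, $\rho^*=\ell$ if $\rho_0<\ell$, and $\rho^*=u$ if $\rho_0>u$; compactly, $\rho^* = \min\{\max\{\rho_0,\ell\},u\}$. Substituting $\ell$ and $u$ and invoking the associativity of $\max$ and $\min$ collapses this to $\rho^* = \min\{\max\{\rho_0,\rho_{\mathrm{th}},\rho_a\},1,\rho_b\}$, which is exactly \eqref{rhostar}.

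I expect the main obstacle to be rigorously establishing that the latency constraint carves out a single interval rather than a more complicated set; this rests entirely on the strict convexity of $h$, which in turn hinges on the sign $K_2<0$. I would also make explicit the feasibility assumption that the minimum of $h$ does not exceed $T$, so that the two roots $\rho_a<\rho_b$ exist and $\ell\le u$ holds, a condition the theorem statement tacitly presumes.
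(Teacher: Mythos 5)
Your proposal is correct and follows essentially the same route as the paper's proof: identify the unconstrained minimizer $\rho_0 = F/A$ of the convex objective, observe that the latency constraint carves out an interval $[\rho_a,\rho_b]$, and project $\rho_0$ onto the intersection of that interval with $[\rho_{\mathrm{th}},1]$. You actually supply more justification than the paper does (the sign of $K_2$, the convexity of $h$, and the existence of the two roots); the only element the paper adds that you omit is the explicit closed-form expression of $\rho_a$ and $\rho_b$ via the $W_0$ and $W_{-1}$ branches of the Lambert W function, which is a presentational rather than a substantive difference.
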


\begin{proof}
\emph{
Problem \eqref{eq.rho} admits a closed-form solution. The unconstrained minimizer of the objective is $\rho_0$. The latency constraint \eqref{cons.T2} defines a feasible interval $\left[\rho_a, \rho_b\right]$. These roots can be expressed using the principal ($W_0$) and secondary ($W_{-1}$) branches of the Lambert W function:
\begin{align}
    \rho_a = \frac{K_2}{K_1} W_0\left(\frac{K_1}{K_2} e^{\frac{T}{K_2}}\right),
    \rho_b = \frac{K_2}{K_1} W_{-1}\left(\frac{K_1}{K_2} e^{\frac{T}{K_2}}\right).
\end{align}
The optimal $\rho^*$ is then found by projecting the unconstrained minimizer $\rho_0$ onto the final feasible region defined by the intersection of $\left[\rho_a, \rho_b\right]$ and $\left[\rho_{\mathrm{th}}, 1\right]$, leading to Eq. \eqref{rhostar}.
} \hfill$\blacksquare$
\end{proof}

Crucially, Theorem \ref{theo2} establishes the optimal semantic compression ratio as an explicit function of the transmit powers.

\subsubsection{Two-Dimensional Search on Transmit Powers}
By substituting the analytical solutions for the optimal UAV location from Theorem \ref{theo1} and the optimal semantic compression ratio from Theorem \ref{theo2} back into problem \eqref{eq.pf}, the original multi-variable optimization problem is reduced to a two-dimensional problem over the transmit powers for a fixed $a$:
\begin{subequations}\label{eq.pp}
    \begin{align}
		\min_{p_\mathrm{U},p_\mathrm{B}} &  \frac{p_\mathrm{U} D \rho^*\left(p_\mathrm{U},p_\mathrm{B}\right)}{B_\mathrm{U} \log\left(1+\frac{p_\mathrm{U} G(\mathbf{L}_\mathrm{U}^*)}{B_\mathrm{U} N_0}\right)} + \frac{p_\mathrm{B} D (1-a)}{B_\mathrm{B} \log\left(1+\frac{p_\mathrm{B} G_\mathrm{BS}}{B_\mathrm{B} N_0}\right)}\tag{\theequation}\\
		\mathrm{s.t.}\hspace{0.2em}
        & T_\mathrm{U}+T_\mathrm{UB}+T_\mathrm{B}+aT_\mathrm{B}^\mathrm{I}+(1-a)T_\mathrm{BS} \leq T_{\mathrm{th}},\label{cons.Tpp}\\
        & 0<p_\mathrm{U}\leq p_\mathrm{U}^{\max},\label{cons.pupp}\\
        & 0<p_\mathrm{B}\leq p_\mathrm{B}^{\max}.\label{cons.pbpp}
    \end{align}
\end{subequations}
Due to the intricate form of $\rho^*\left(p_\mathrm{U}, p_\mathrm{B}\right)$, which involves the Lambert W function, the objective and constraint functions in problem \eqref{eq.pp} are highly non-convex, and an analytical solution for the optimal powers is intractable. Therefore, we find the solution by performing a two-dimensional search over the feasible power ranges. This method guarantees global optimality for the reduced problem and remains computationally feasible.

\subsection{Optimal Task Offloading Decision}
The task offloading variable $a$ is binary. Given that our previous steps find the optimal solution for a fixed $a$, we can determine the optimal $a^*$ by solving the problem for each case and selecting the one that yields a lower total energy consumption. Let $V_0$ and $V_1$ be the minimum energy values obtained from the two-dimensional search for $a=0$ and $a=1$, respectively. The optimal offloading decision is:
\begin{equation}
    a^*=
    \begin{cases}
        0,&\text{if }V_0 < V_1,\\
        1,&\text{otherwise}.
    \end{cases}
\end{equation}

\subsection{Overall Algorithm Procedure}
The overall algorithm to solve problem \eqref{eq.pf} with global optimality is outlined in Algorithm \ref{algo}.

\begin{algorithm}[t]
\caption{Solving Problem \eqref{eq.pf} with Optimal Solution}
\begin{algorithmic}[1]\label{algo}
\STATE Obtain the optimal UAV location $\mathbf{L}_\mathrm{U}^*$ using Theorem \ref{theo1}.
\STATE Obtain the optimal semantic compression ratio $\rho^*\left(p_\mathrm{U},p_\mathrm{B}\right)$ using Theorem \ref{theo2}.
\FOR{$a \in \{0, 1\}$}
    \STATE Solve problem \eqref{eq.pp} via two-dimensional search.
    \STATE Store the minimum total energy as $V_a$.
\ENDFOR
\STATE Compare $V_0$ and $V_1$ to determine $a^*$ and the corresponding optimal variables.
\end{algorithmic}
\end{algorithm}

\section{Simulation Results}\label{Sec:sr}
In the simulations, the moving target is located at (\SI{300}{\metre}, \SI{100}{\metre}, \SI{0}{\metre}), and the UAV's operational altitude is constrained to [\SI{40}{\metre}, \SI{400}{\metre}]. The maximum transmit power for both the UAV and the mobile BS is set to \SI{30}{dBm}. Unless specified otherwise, the default parameters include a BS-server link bandwidth of \SI{0.5}{MHz}, a raw data size of \SI{1}{Mb}, and a maximum allowable latency of \SI{700}{ms}.

We evaluate the performance of our proposed algorithm against several benchmark schemes: \textbf{Generate at the Server/BS}, which consistently offloads the task to the server or processes it at the BS, respectively; \textbf{Non-Semantic}, which transmits the raw data without any semantic compression; \textbf{Max Transmit Power}, which operates the UAV and BS at their maximum power levels; \textbf{Fixed UAV Location}, which places the UAV at $(x_\mathrm{T}, y_\mathrm{T}, H_{\min})$; and \textbf{BCD Algorithm}, which employs block coordinate descent (BCD) to alternately optimize the semantic compression ratio and transmit powers.

Fig.~\ref{fig.sim} illustrates the total energy consumption as a function of the BS-server link bandwidth, raw data size, and maximum allowable latency. Across all simulated scenarios, the proposed algorithm consistently achieves the lowest energy consumption, which validates its superior performance in joint resource allocation. As depicted in the figures, the `Max Transmit Power' scheme is uniformly inefficient, underscoring the necessity of adaptive power control. The `Fixed UAV Location' scheme exhibits a consistent performance gap, which highlights the energy savings gained from optimizing the UAV's deployment. Furthermore, the `BCD Algorithm' occasionally converges to a local optimum, demonstrating the advantage of our proposed method in attaining a globally optimal solution.

More specifically, Fig.~\ref{fig.sim}(a) shows that as the BS-server bandwidth increases, the energy cost of offloading to the server decreases, making it the preferable strategy. Our algorithm adaptively selects this optimal offloading decision, outperforming the fixed `Generate at the Server/BS' scheme. As seen in Fig.~\ref{fig.sim}(b) and Fig.~\ref{fig.sim}(c), the `Non-Semantic' approach performs poorly with large data sizes or stringent latency constraints, due to the excessive energy required for transmitting uncompressed data. This confirms the significant efficiency gains provided by semantic communications.

\begin{figure*}[t]
    \centering
    \vspace{-2em}
    \subfigure{
        \begin{minipage}{0.33\textwidth}
            \centering
            \includegraphics[width=1\textwidth]{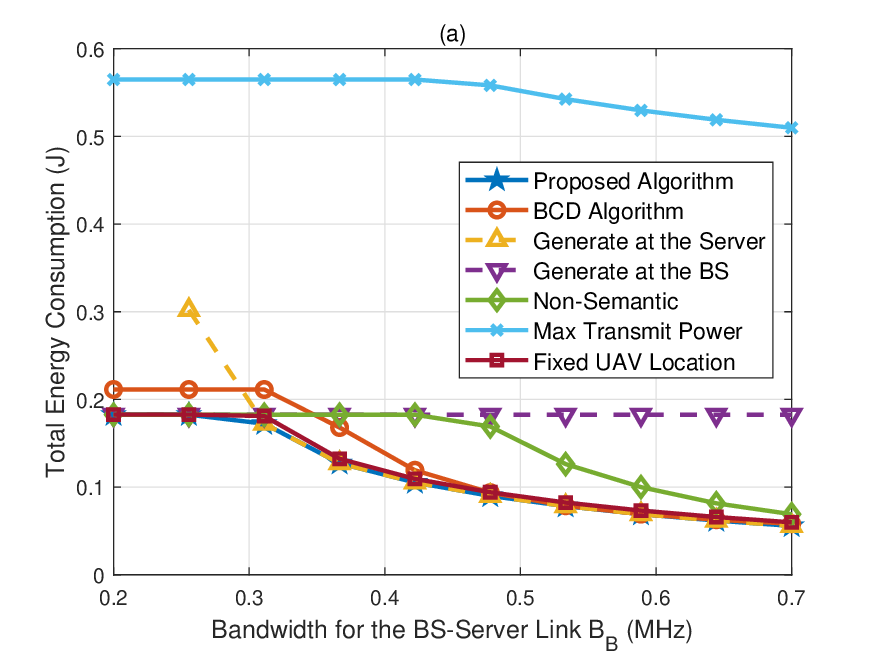}
            \label{fig.BB}
    \end{minipage}}
    \hspace{-5mm}
    \subfigure{
        \begin{minipage}{0.33\textwidth}
            \centering
            \includegraphics[width=1\textwidth]{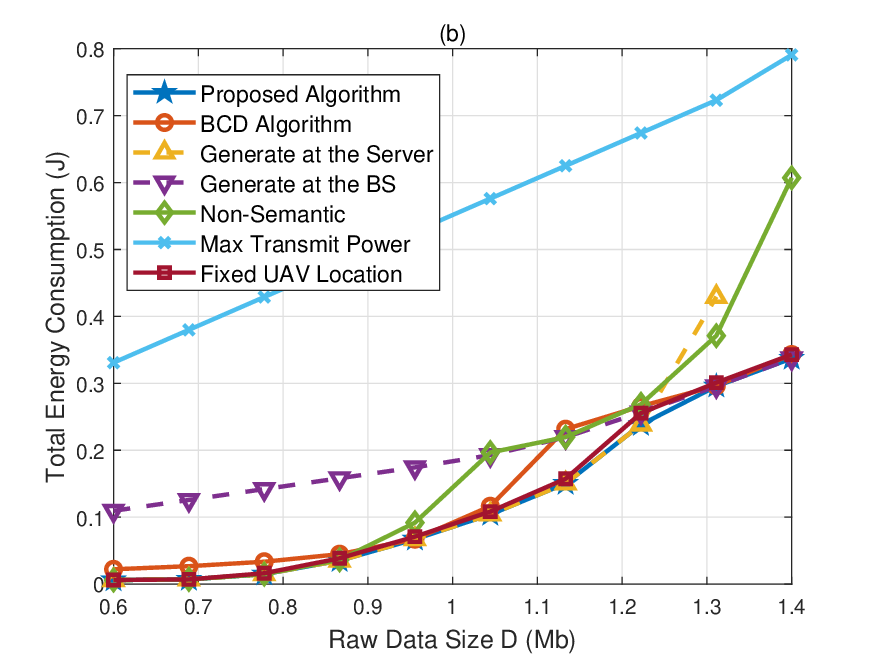}
            \label{fig.D}	
    \end{minipage}}
        \hspace{-5mm}
        \subfigure{
            \begin{minipage}{0.33\textwidth}
                \centering
                \includegraphics[width=1\textwidth]{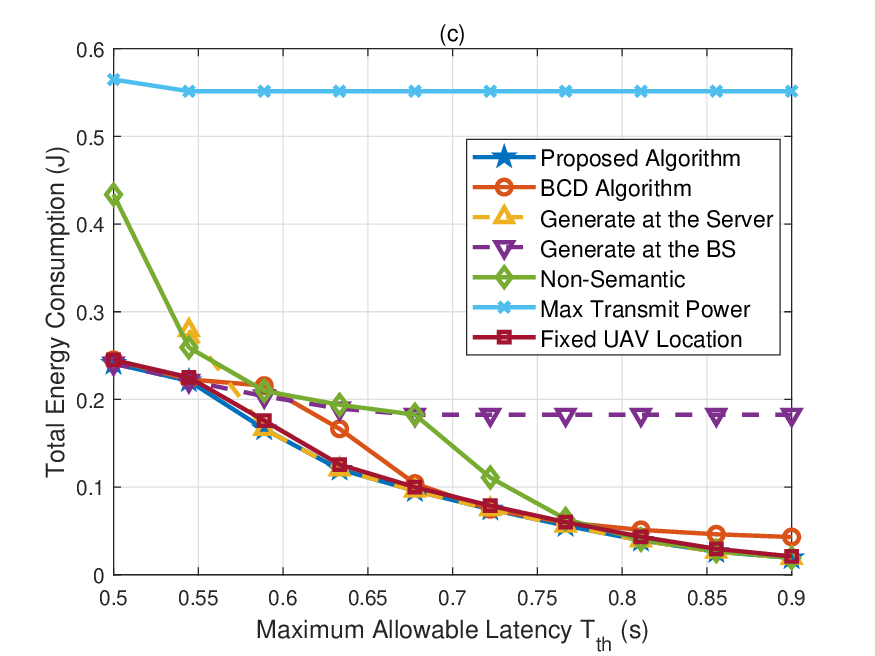}
                \label{fig.Tth}	
        \end{minipage}}
    \vspace{-1.4em}
    \caption{Total energy consumption versus: (a) Bandwidth for the BS-server link $B_\mathrm{B}$, (b) Raw data size $D$, (c) Maximum allowable latency $T_{\mathrm{th}}$.}
    \label{fig.sim} 
    \vspace{-1em}
\end{figure*}

\section{Conclusion}\label{Sec:c}
In this letter, we designed an energy-efficient framework for an agentic AI-assisted low-altitude semantic wireless network. We formulated a comprehensive system-wide energy minimization problem that captures the intricate trade-offs between communication, computation, and sensing performance. By jointly optimizing the UAV's 3D placement, the level of semantic compression, transmit power control, and the AI inference task offloading, our approach aims to enhance the operational endurance of the system. We developed an efficient algorithm to obtain the globally optimal solution for this non-convex optimization problem. 
Future research could extend this framework to multi-UAV, multi-target scenarios, incorporate dynamic mobility models, and explore reinforcement learning-based methods for real-time resource allocation in more complex and uncertain environments.

\bibliographystyle{IEEEtran}
\bibliography{ref}

\end{document}